\newcommand{\beqn}{\begin{equation}}
\newcommand{\eeqn}{\end{equation}}
\newcommand{\cN}{\mathcal{N}}
\newcommand{\hk}{\hat{k}}
\newcommand{\R}{\mathbb{R}}
\newcommand\vct[1]{\bm{\mathsf{#1}}}
\newcommand\mtx[1]{\bm{\mathsf{#1}}}
\newcommand{\tran}{^{\mathsf{T}}}
\newtheorem{theorem}{Theorem}
\newtheorem{algorithm}{Algorithm}
\title{Efficient Fourier representations of families of Gaussian processes}
\author{Philip Greengard\thanks{Department of Statistics, Columbia University, pg2118@columbia.edu. Research supported by the Alfred P. Sloan Foundation.}}
\date{May 31, 2024}
\begin{document}

\maketitle

\begin{abstract}
We introduce a class of algorithms for constructing Fourier representations of Gaussian processes in $1$ dimension that are valid over ranges of hyperparameter values. The scaling and frequencies of the Fourier basis functions are evaluated numerically via generalized quadratures. The representations introduced allow for $O(m^3)$ inference, independent of $N$, for all hyperparameters in the user-specified range after $O(N + m^2\log{m})$ precomputation where $N$, the number of data points, is usually significantly larger than $m$, the number of basis functions. Inference independent of $N$ for various hyperparameters is facilitated by generalized quadratures, and the $O(N + m^2\log{m})$ precomputation is achieved with the non-uniform FFT.  Numerical results are provided for Mat\'ern kernels with $\nu \in [3/2, 7/2]$ and lengthscale $\rho \in [0.1, 0.5]$ and squared-exponential kernels with lengthscale $\rho \in [0.1, 0.5]$. The algorithms of this paper generalize mathematically to higher dimensions, though they suffer from the standard curse of dimensionality.
\end{abstract}

\section{Introduction}
Gaussian process (GP) regression has become ubiquitous as a statistical tool in many applied sciences including astrophysics, environmental sciences, molecular dynamics, financial economics, and social sciences~\cite{banerjee1, dfm1, bda, gonzalvez1, stein2, cressie1, bartok1, luger1}. 

In most Gaussian process regression problems a user has observed data $\{(x_i, y_i)\}$ where $x_1,...,x_N$ are independent variables that belong to some set $D \subseteq \R^d$ and $y_i \in \R$ are dependent variables. The observed data $y_1,...,y_N$ are assumed to be observations of the form 
\begin{align}
y_i = f(x_i) + \epsilon_i
\end{align}
where $\epsilon_i$ is independent and identically distributed (iid) Gaussian noise, and $f: D \to \R$ is an unknown function, which is given a Gaussian process prior $f(x) \sim \mathcal{GP}(\mu(x), k(x,x'))$ where $\mu$ is a user-specified mean function and $k$ is a covariance function~\cite{rasmus1}. 

The primary computational limitation of GP regression as a practical tool is the  
cost of the matrix inversion and determinant that appear in the likelihood function 
of a Gaussian process
\begin{align}\label{63}
p(\vct{y}) \propto \frac{1}{|\mtx{K} + \sigma^2 \mtx{I} |^{1/2}} \exp^{-\frac{1}{2} \vct{y}\tran (\mtx{K} + \sigma^2\mtx{I})^{-1} \vct{y}},
\end{align}
where $\mtx{K}$ is the $N \times N$ matrix such that $\mtx{K}_{i,j} = k(x_i, x_j)$ and $\sigma^2$ is the variance of the iid observation noise. For a general symmetric matrix, direct inversion and determinant evaluation both require $O(N^3)$ operations, which for many modern problems is far too costly. A large body of literature has emerged over the last couple of decades on efficient schemes for evaluating these quantities when $N$ is large (i.e. greater than around 10,000). Often, these computational methods rely on taking advantage of the particular structure of covariance matrices~\cite{oneil1, minden1, dfm1}, approximating the covariance matrix with a low-rank matrix (reduced-rank methods)~\cite{candela1, solin1, greengard2}, spectral methods~\cite{hensman1, lazaro2010, rahimi2007}, or many more strategies. 

In this paper we introduce numerical algorithms for representing a zero-mean Gaussian process $f$ with a translation-invariant covariance kernel $k$ as a random expansion of the form
\begin{align}\label{23}
f(x) \sim \sum_{i=1}^{m} \alpha_i \gamma_i \cos(2 \pi \xi_i x) + \beta_i \gamma_i \sin(2 \pi \xi_i x),
\end{align}
where for $i=1,...,m$ the frequencies $\xi_i \in \R$ are fixed, and $\alpha_i$ and $\beta_i$ are the iid Gaussians
\begin{align}
\alpha_i, \beta_i \sim \cN \big(0, 1 \big).
\end{align}
The coefficients $\gamma_i$ of (\ref{23}) are defined by 
\begin{align}\label{gamma_i}
\gamma_i = \sqrt{2 w_i \hk(\xi_i)},
\end{align}
where $\hk$ denotes the Fourier transform of the covariance kernel and $w_i > 0$. The numerical work in constructing expansion (\ref{23}) involves  finding the frequencies $\xi_i$ and the weights $w_i$ such that expansion (\ref{23}) has an effective covariance kernel that approximates the desired kernel (or family of kernels) to high accuracy. We provide numerical results for constructing an expansion of the form (\ref{23}) that is valid for all Mat\'ern kernels with $\rho \in [0.1, 0.5]$, and $\nu \in [3/2, 7/2]$. We also include results for approximating \eqref{23} with squared-exponential kernels with $\rho \in [0.1, 0.5]$. 

There are several advantages to using expansion (\ref{23}) as a representation of a family of GPs. One computational benefit stems from the so-called weight-space approach to Gaussian process regression (see e.g. \cite{rasmus1}, \cite{greengard2}, \cite{filip2019random}) in which the posterior density is defined over the coefficients of a basis function expansion such as \eqref{23}. Specifically, given data $\{(x_i, y_i)\}$ and a covariance kernel $k$, weight-space Gaussian process regression is the $L^2$-regularized linear regression 
\begin{equation}\label{72}
\begin{aligned}
\vct{y} & \sim \mtx{X}\vct{\beta} + \vct{\epsilon} \\ 
\vct{\epsilon} & \sim \cN(0, \sigma^2 \mtx{I}) \\
\vct{\beta} & \sim \cN(0, \mtx{I})
\end{aligned}
\end{equation}
where $\vct{\beta} \in \R^{2m}$ is the vector of coefficients and $\mtx{X}$ is the $N \times 2m$ matrix 
\begin{equation}\label{x_mat}
\mtx{X} = 
\renewcommand*{\arraystretch}{1.3}
\begin{bmatrix}
    \gamma_1 \cos(\xi_1 x_1) & \dots  & \gamma_m \cos(\xi_m x_1) & \gamma_1 \sin(\xi_1 x_1) & \dots & \gamma_m \sin(\xi_m x_1) \\
    \gamma_1 \cos(\xi_1 x_2) & \dots  & \gamma_m \cos(\xi_m x_2) & \gamma_1 \sin(\xi_1 x_2) & \dots & \gamma_m \sin(\xi_m x_2) \\
    \vdots &  & \vdots & \vdots & & \vdots \\
    \gamma_1\cos(\xi_1 x_N) & \dots  & \gamma_m\cos(\xi_m x_N) & \gamma_1\sin(\xi_1 x_N) & \dots & \gamma_m\sin(\xi_m x_N)
\end{bmatrix}.
\end{equation}
The posterior density corresponding to (\ref{72}) is defined over $2m$ dimensions (assuming $\sigma$ is fixed) and the posterior mean function is given by
\begin{align}
\sum_{i=1}^{m} \vct{\bar{\beta}}_{1,i} \gamma_i \cos(\xi_i x) + \vct{\bar{\beta}}_{2,i} \gamma_i \sin(\xi_i x)
\end{align}
where $[\vct{\bar{\beta}}_1 \, \vct{\bar{\beta}}_2]\tran$ is the solution to the linear system of equations
\begin{align}\label{203}
(\mtx{X\tran X} + \sigma^2 \mtx{I}) \vct{\bar{\beta}} = \mtx{X\tran}\vct{y}.
\end{align}
The equivalence between the solution to \eqref{203} and the solution to the
so-called function space system in \eqref{63} is described in \cite{greengard3}.

Representing a GP with Fourier expansion (\ref{23}) has the advantage that GP regression via linear system (\ref{203}) can be solved in $O(N + m^3)$ operations. 
For a general $N \times 2m$ matrix, $\mtx{X}$, solving linear system (\ref{203}) requires $O(Nm^2)$ operations, which can be prohibitively expensive. However, since we use Fourier expansions, the $2m \times 2m$ matrix $\mtx{X\tran X}$ can be formed in $O(N + m^2\log{m})$ operations using a non-uniform fast Fourier transform (FFT)~\cite{dutt1, lgreengard1}. The efficient formation of $\mtx{X\tran X}$  reduces the total computational cost of solving the linear system from $O(Nm^2)$ operations to $O(N + m^3)$. In general, $m$ is sufficiently small that $m^3$ operations is easily affordable on a laptop. Furthermore, for GP problems that involve fitting hyperparameters, the cost of solving linear system \eqref{203} is $O(m^3)$ operations for all hyperparameters after a precomputation of $O(N + m^2\log{m})$ operations. 

The numerical work in finding $\xi_i$ and $w_i$ of Fourier expansion \eqref{23} reduces to constructing a quadrature rule (see, e.g. \cite{ma1}) for an inverse Fourier transform. Specifically, if $k$ is an integrable translation-invariant covariance kernel with Fourier transform $\hk$, then $k$ satisfies
\begin{align}
k(x) = \int_{\R} \hat{k}(\xi) e^{2\pi i \xi x} d\xi.
\end{align}
It turns out that expansion (\ref{23}) can be constructed by evaluating $\xi_i \in \R$ and $w_i \in \R^+$ such that $k(x)$ is well approximated by the sum
\begin{align}\label{intro_quad}
\sum_{j=1}^n w_j \hk(\xi_j) e^{2\pi i \xi_j x}
\end{align}
for a family of covariance kernels where $x$ is in some region of interest.

In this paper, we find $\xi_i$ and $w_i$ with algorithms for constructing generalized Gaussian quadratures. The theory associated with generalized Gaussian quadratures was originally introduced in 1966~\cite{karlin1} and more recently, efficient numerical algorithms have made constructing generalized Gaussian quadratures practical for many modern problems\cite{ma1, bremer1}. The use of such quadratures is now widespread in several environments including in computational physics for the solution of integral equations with singular kernels, and in the numerical solution of certain partial differential equations (e.g.~\cite{ma1, hoskins1, bremer2}). 

Fourier representation \eqref{23} is also used in \cite{greengard3} for representing
GPs, though in \cite{greengard3} the frequencies $\xi_1,...,\xi_m$ are equispaced (and lie on a Cartesian grid in higher dimensions). 
The primary purpose of that choice of quadrature 
is its benefits in $2$ and $3$ dimensions for spatial and spatio-temporal problems.
By using equispaced nodes, the weight space matrix of \eqref{203} 
has Toeplitz structure and can be applied in $O(m \log m)$ operations allowing the 
efficient use of iterative solvers. The dense linear algebra solvers that are used 
in this paper are impractical in high dimensions since the number of basis functions
required for a fixed level of accuracy grows exponentially in dimension. 

In one dimension, the generalized quadrature approach of this paper can offer
a significant advantage over the equispaced approach, especially for 
problems that involve fitting hyperparameters. 
It is common to fit hyperparameters over ranges of values that include kernels that 
concentrate near zero in the spectral domain (e.g., large lengthscale $\rho$ in Mat\'ern kernel \eqref{mat_ker}) 
as well as those that have slow decay at infinity in the spectral domain
(e.g., small $\nu$ in \eqref{mat_ker}). Using 
an equispaced quadrature rule to simultaneously integrate those kernels
 would require a small grid spacing over a large interval and thus a large 
 number of nodes. 
Here, we use algorithms for generalized quadratures to construct quadrature 
rules that simultaneously discretize such families of functions to high accuracy. 

The basis function approach of this paper is similar in spirit to several popular
basis function approaches including \cite{lazaro2010, rahimi2007, greengard2}. 
The methods of this paper, however, have two primary advantages. 

\begin{itemize}

\item 
Fourier basis functions are amenable to fast algorithms for solving the linear systems of GP regression. We capitalize on this fact to obtain $O(N + m^3)$ computational complexity for GP regression. 

\item 
The Fourier expansions of this paper are valid over families of covariance kernels, 
whereas other basis function approaches require recomputing basis functions 
for kernels with different hyperparameter values. As a result, 
adaptation of hyperparameters is simplified. For all hyperparameter values in the domain of the Fourier expansion, GP regression is performed in $O(m^3)$ operations after a precomputation of $O(N + m^3)$ operations. 

\end{itemize}

There have been a number of kernel approximation methods introduced 
that, like the methods of this paper, approximate a kernel via quadrature 
in the Fourier domain of a translation-invariant kernel. For example, 
the random Fourier features approach of \cite{rahimi2007} 
uses a Monte-Carlo integration in Fourier domain and \cite{munkhoeva2018}
introduces a more general set of quadratures that also rely on random features. 
Deterministic, higher-order quadrature methods have also been introduced 
for this purpose. For example, in \cite{dao2017, shustin1}, Gaussian
(or Gauss-Legendre) nodes and weights are used for integration in Fourier 
domain. 
Such methods can be effective for an individual kernel, but are generally less 
useful for integrating commonly-used families of spectral densities.

The remainder of this paper is structured as follows. In the following section, we introduce theoretical and numerical tools for representing GPs as Fourier expansions. In Section \ref{s20} we describe fast algorithms for GP regression. We provide the numerical results of the algorithms of this paper in Section \ref{s120}. We conclude with a brief discussion of future directions of research in Section \ref{s125}.

\section{Spectral representation of GPs}\label{s15}
Translation invariant (or stationary) covariance kernels are commonly used in practice and include kernel families such as Mat\'ern, squared-exponential, rational quadratic, periodic, and many more~\cite{rasmus1}. A translation invariant kernel is a function $k(x, y) : \R^d \times \R^d \rightarrow \R$ that can be expressed as a function of $x - y$. In a slight abuse of notation we refer to translation invariant kernels $k(x, y)$ as functions of one variable, $k(x)$ for $x \in \R$. 

All of the previously mentioned stationary kernels are also isotropic, that is, they are functions 
$k(x, y)$ that depend on $\| x - y \|$. For those kernels, $k(x) = k(-x)$, their Fourier transforms are real-valued and symmetric. However, not all symmetric functions are valid covariance kernels. Translation invariant kernels have a particular property---$k$ is a valid kernel if and only if its Fourier transform is non-negatively valued. This property is known as Bochner's theorem~\cite{rasmus1}. 

Clearly an integrable translation-invariant kernel $k$ can be expressed as the inverse Fourier transform 
\begin{align}
k(x) = \int_{-\infty}^{\infty} \hk(\xi) e^{2\pi i\xi x} d\xi,
\end{align}
or equivalently, since $\hk$ is even, \begin{align}\label{107}
k(x) = \int_{0}^{\infty} 2 \hk(\xi) \cos(2\pi \xi x) d\xi.
\end{align}
A discretized version of integral (\ref{107}), or an order-$n$ quadrature rule for approximating the integral is a sum of the form
\begin{align}\label{110}
k(x) \approx \sum_{j=1}^m 2 w_j \hk(\xi_j) \cos(2\pi \xi_j x) =: k'(x),
\end{align}
where $\xi_j > 0$ and $w_j > 0$\footnote{The weights $w_j$ need not be
positively valued, though throughout this paper, and in general in the 
literature, $w_j$ are assumed to be positively valued.}. 

The following theorem demonstrates that any sum of the form (\ref{110}) corresponds to a basis function representation of a Gaussian process with a covariance kernel that approximates $k$ with the accuracy of discretization (\ref{110}). 

\begin{theorem}\label{120}
Let $f$ be the random expansion defined by
\begin{align}\label{56}
f(x) \sim \sum_{i=1}^{m} \alpha_i \gamma_i \cos(2 \pi \xi_i x) 
+ \beta_i \gamma_i \sin(2 \pi \xi_i x),
\end{align}
where for all $i,j = 1,...,m$
\begin{align}
\alpha_i, \beta_j \sim \cN (0, 1)
\end{align}
are iid and $\gamma_i$ are defined by
\begin{align}
\gamma_i = \sqrt{2 w_i \hk(\xi_i)}
\end{align}
for some $\xi_i, w_i > 0$.
Then $f$ is a Gaussian process distribution with covariance kernel $k'$ defined by the formula
\begin{align}
k'(x) = \sum_{i=1}^m 2 w_j \hk(\xi_j) \cos(2\pi \xi_j x).
\end{align}
\end{theorem}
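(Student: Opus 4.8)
The plan is to verify the two defining properties of a Gaussian process directly: that every finite-dimensional marginal of $f$ is jointly Gaussian, and that the resulting mean and covariance match the claim. Since a Gaussian process is completely determined by its mean function and its covariance function, establishing these two facts suffices.

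First I would observe that for any finite collection of points $x_1, \dots, x_k$, the vector $(f(x_1), \dots, f(x_k))\tran$ is a fixed linear transformation of the Gaussian vector $(\alpha_1, \beta_1, \dots, \alpha_m, \beta_m)\tran$, whose entries are iid $\cN(0,1)$. Because any linear image of a jointly Gaussian vector is again jointly Gaussian, the finite-dimensional marginals of $f$ are Gaussian, so $f$ is a Gaussian process. Its mean function vanishes identically, since $\mathbb{E}[\alpha_i] = \mathbb{E}[\beta_i] = 0$ gives $\mathbb{E}[f(x)] = \sum_{i=1}^m \gamma_i(\mathbb{E}[\alpha_i]\cos(\xi_i x) + \mathbb{E}[\beta_i]\sin(\xi_i x)) = 0$ for every $x$.

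Next I would compute the covariance $\mathbb{E}[f(x) f(y)]$ by expanding the product of the two sums and taking expectations term by term. The key input is the orthonormality of the coefficients: $\mathbb{E}[\alpha_i \alpha_j] = \mathbb{E}[\beta_i \beta_j] = \delta_{ij}$ and $\mathbb{E}[\alpha_i \beta_j] = 0$ for all $i,j$. Every cross term therefore vanishes, leaving $\sum_{i=1}^m \gamma_i^2\left(\cos(\xi_i x)\cos(\xi_i y) + \sin(\xi_i x)\sin(\xi_i y)\right)$. Applying the angle-subtraction identity $\cos(\xi_i x)\cos(\xi_i y) + \sin(\xi_i x)\sin(\xi_i y) = \cos(\xi_i(x-y))$ collapses each summand into a single cosine.

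Finally I would substitute $\gamma_i^2 = 2 w_i \hk(\xi_i)$ and set $d = x - y$, which yields $\mathbb{E}[f(x)f(y)] = \sum_{i=1}^m 2 w_i \hk(\xi_i)\cos(\xi_i d) = k'(d)$, matching the claimed kernel. The argument is a routine direct computation; the only point requiring any care is the bookkeeping of the cross terms, so that independence of the $\alpha_i$ and $\beta_i$ eliminates every product except the diagonal $\gamma_i^2$ contributions, together with confirming via the trigonometric identity that the covariance depends on $x$ and $y$ only through $x-y$ — i.e. that $k'$ is genuinely translation invariant, as the notation $k'(d)$ presupposes.
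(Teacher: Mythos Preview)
Your proposal is correct and follows essentially the same route as the paper: expand $\mathbb{E}[f(x)f(y)]$, use independence of the $\alpha_i,\beta_j$ to kill cross terms, and apply the addition formula $\cos a\cos b+\sin a\sin b=\cos(a-b)$. If anything your write-up is slightly more complete, since you explicitly verify the finite-dimensional Gaussianity and the zero mean, whereas the paper's proof records only the covariance computation.
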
 
\begin{proof}
Using the independence of the Gaussian coefficients of $f$, clearly, 
\begin{equation}\label{111}
\begin{aligned}
E[f(x) f(y)] & = \sum_{j=1}^m 2w_j \hk(\xi_j) \cos(2\pi\xi_j x)\cos(2\pi\xi_j y)
+ \sum_{j=1}^m 2w_j \hk(\xi_j) \sin(2\pi\xi_j x)\sin(2\pi\xi_j y) \\
& =  \sum_{j=1}^m 2 w_i \hk(\xi_j) 
\bigg( \cos(2\pi\xi_j x)\cos(2\pi\xi_j y) + \sin(2\pi\xi_j x)\sin(2\pi\xi_j y) \bigg)
\end{aligned}
\end{equation}
Applying standard trigonometric properties to (\ref{111}), we obtain
\begin{align}
E[f(x) f(y)] 
& = \sum_{j=1}^m 2 w_j \hk(\xi_j) \cos(2\pi\xi_j (x-y)).
\end{align}
\end{proof}

An immediate consequence of Theorem \ref{120} is that an $m$-point quadrature rule for the evaluation of the inverse Fourier transform of a covariance kernel provides a weight-space representation of that Gaussian process. Moreover the accuracy of the quadrature rule is exactly the accuracy of the effective covariance kernel. Specifically, 
\begin{align}\label{111b}
\bigg| \int_{0}^{\infty} 2 \hk(\xi) \cos(2\pi \xi x) d\xi - \sum_{j=1}^m 2 w_j \hk(\xi_j) \cos(2\pi \xi_j x) \bigg| = | k(x) - k'(x) |.
\end{align}
Due to the computational demands of computing with Gaussian processes,
practitioners often make tradeoffs between accuracy of some approximate
inference algorithm and computational efficiency. 
In the algorithm of this paper, we use quadrature rules, discussed in subsequent sections, that are equipped with theoretical guarantees on their accuracy. These guarantees on quadrature accuracy translate directly to guarantees on pointwise kernel approximation error due to \eqref{111b}. We now describe the numerical procedure we use for constructing quadrature rules for Gaussian process representation \eqref{56}.

\subsection{Gaussian quadratures for covariance kernels}
In the numerical scheme of~\cite{bremer1} for constructing generalized Gaussian 
quadratures, the user inputs functions $\phi_1,...,\phi_{n} : [a, b] \rightarrow \R$ for 
some $n > 1$ and is returned the nodes $x_1,...,x_m \in [a, b]$ and weights 
$w_1,...,w_m \in \R^+$ such that 
\begin{align}\label{121}
\bigg| \int_{a}^{b} \phi_j(x) dx  
- \sum_{i=1}^m w_i \phi_j(x_i) \bigg| < \epsilon
\end{align}
for some user-specified $\epsilon > 0$ for all $j = 1,...,n$. 

In this paper, we are concerned with constructing a particular class of quadrature rules. If $f$ is a Gaussian process defined on $[a, b]$ with covariance kernel $k$, then we seek to approximate integrals of the form 
\begin{align}\label{1723}
\int_{0}^{\infty} 2 \hk(\xi) \cos(2\pi \xi t) d\xi
\end{align}
for all $t \in [0, b-a]$, which contains $\{|x - y| : x, y \in [a, b]\}$. We therefore use the numerical scheme in ~\cite{bremer1} to construct quadrature rules for the set of functions 
\begin{align}
\phi_j(\xi) = 2\hk(\xi) \cos(2\pi \xi t_j)
\end{align}
for $j =1, 2, ..., n$ and $\xi \in [0, \infty)$. Since the integrals in (\ref{1723})
are smooth functions in $t$, it is sufficient to choose $t_1,...,t_n$ as, for example, order-$n$ Chebyshev nodes on $[0, b-a]$~\cite{trefethen}, provided that $n$ is sufficiently large. 
This procedure can be viewed as expanding a smooth function in Chebyshev series where computation is done in $\epsilon$-precision arithmetic. 
Specifically, suppose that we construct a quadrature rule with nodes $\xi_1,...,\xi_m > 0$ and $w_1,...,w_m > 0$ such that
 \begin{align}
\bigg| \int_{0}^{\infty} 2 \hk(\xi) \cos(2\pi \xi t_j) d\xi  
- \sum_{i=1}^m 2 w_i \hk(\xi_i) \cos(2\pi \xi_i t_j) \bigg| < \epsilon
\end{align}
for some user-specified $\epsilon > 0$ and $j=1,...,n$. Then for sufficiently large $n$,
 \begin{align}
\bigg| \int_{0}^{\infty} 2 \hk(\xi) \cos(2\pi \xi t) d\xi  
- \sum_{i=1}^m 2 w_i \hk(\xi_i) \cos(2\pi \xi_i t) \bigg| < \epsilon
\end{align}
for all $t \in [0, b-a]$.

Similarly, we can use the same strategy to discretize a family of covariance functions over ranges of hyperparameters, provided that $\hk$ is a smooth function of those hyperparameters. For example, suppose that $f$ is a GP defined on $[-1, 1]$ with a Mat\'ern kernel $k_{\nu, \rho}$ where $k_{\nu, \rho}$ and its Fourier transform $\hk_{\nu, \rho}$ are defined by 
\begin{equation}\label{mat_ker}
\begin{aligned}
k_{\nu, \rho}(x) & = \frac{2^{1-\nu}}{\Gamma(\nu)}
\bigg(  \sqrt{2\nu}\frac{x}{\rho}\bigg)^{\nu}  K_{\nu}\bigg(\sqrt{2\nu}\frac{x}{\rho}\bigg) \qquad \text{and}\\
\hk_{\nu, \rho}(\xi) & := \int_{\R} \hk(\xi) e^{2\pi i\xi x}d\xi \propto \frac{\Gamma(\nu + \frac{1}{2})}{\Gamma(\nu) \rho^{2\nu}} \bigg( \frac{2\nu}{\rho^2} \bigg)^{\nu} \bigg(\frac{2\nu}{\rho^2} + 4\pi^2 \xi^2 \bigg)^{-(\nu + 1/2)},
\end{aligned}
\end{equation}
where $\nu \geq 1/2$, $\rho$ is the lengthscale, and $K_{\nu}$ is the modified Bessel function of the second kind. 
Now consider the set of covariance kernels $ k_{\nu_i, \rho_j} $ for $i,j = 1,...,p$ where $\nu_1,...,\nu_p$ are the order-$p$ Chebyshev nodes on $[3/2, 7/2]$ and $\rho_1,...,\rho_p$ are the order-$p$ Chebyshev nodes defined on $[0.1, 0.5]$. Then $\hk$ is a smooth function of $\nu, \rho$ over their domain and we can use~\cite{bremer1} to construct quadrature rules for the family of integrals
\begin{align}
\int_{0}^{\infty} 2 \hk_{\nu_i, \rho_j} (\xi) \cos(2\pi \xi t_{\ell}) d\xi
\end{align} 
for all $i, j \in \{1,...,p\}$ and $\ell \in \{1,...,n\}$ up to some tolerance $\epsilon$. 
For large enough $p$ and $n$, the resulting quadrature rules satisfy
\begin{align}
\bigg| \int_{0}^{\infty} 2 \hk_{\nu, \rho}(\xi) \cos(2\pi \xi t) d\xi  
- \sum_{i=1}^m 2 w_i \hk_{\nu, \rho}(\xi_i) \cos(2\pi \xi_i t) \bigg| < \epsilon
\end{align}
for all $t \in [0, 2]$, $\rho \in [0.1, 0.5]$, $\nu \in [3/2, 7/2]$. 

We used the implementation of~\cite{serkh1} for this particular computation with $p = 100$ and $n = 200$. The code took $88$ seconds to run on a laptop and returned $86$ nodes $\xi_1,...,\xi_{86}$ and positive weights $w_1,...,w_{86}$.  
The results of numerical experiments using this quadrature rule are included in Section \ref{s120}. We also include in Section \ref{s120} results of numerical experiments 
with squared-exponential kernels, $k$, defined by 
\begin{align}\label{se_def}
k(x) = e^{\frac{-x^2}{2\rho^2}},
\end{align}
where $\rho$ is a lengthscale hyperparameter. 

We now describe an algorithm for constructing Fourier representations of Gaussian processes for a range of hyperparameter values. 

\begin{algorithm}\label{a10}[Construction of Fourier representations]

\begin{enumerate}

\item Set the interval on which the Gaussian process is defined, $[a, b]$, in addition to the intervals where the hyperparameters $\nu \in [\nu_0, \nu_1]$ and $\rho \in [\rho_0, \rho_1]$ are defined. Additionally, set the error tolerance $\epsilon$ for the accuracy of the quadrature, or equivalently accuracy of the effective covariance kernel. 

\item Construct a quadrature rule over the region of interest using the algorithm of \cite{bremer1}. Specifically, find nodes $\xi_1,...,\xi_m \in \R$ and weights $w_1,...,w_m \in \R^+$ such that 
\begin{align}\label{113}
\bigg| k(x) - \sum_{i=1}^m 2 w_i \hk(\xi_j) \cos(2\pi \xi_j x) \bigg| < \epsilon
\end{align}
for all $\nu \in [\nu_0, \nu_1]$ , $\rho \in [\rho_0, \rho_1]$, and $x \in [0, b-a]$.

\item Define $f: [a, b] \rightarrow \R$ to be the random expansion
\begin{align}\label{122}
f(x) \sim  \sum_{i=1}^{m} \alpha_i \gamma_i \cos(2 \pi \xi_i x) 
+ \beta_i \gamma_i \sin(2 \pi \xi_i x),
\end{align}
where $\gamma_i$ are defined in \eqref{gamma_i} and $\alpha_i$ and $\beta_i$ are iid
standard normal Gaussians. 
Then $f$ is a Gaussian process with effective covariance kernel $k'$ defined by 
\begin{align}
k'(x) = \sum_{i=1}^n 2w_i \hk(\xi_j) \cos(2\pi \xi_j x).
\end{align} 
\end{enumerate}

\end{algorithm}

\section{Regression}\label{s20}
Typically, GP regression is used in the following environment. An applied scientist has observed data $\{(x_i, y_i)_{i=1,...,N}\}$ where $x_i$ are independent variables that belong to some interval $[a, b]\in \R$ (for $1$-dimensional problems) and $y_i \in \R$ are dependent variables. The observed data $y_1,...,y_N$ are assumed to be observations of the form 
\begin{align}
y_i = f(x_i) + \epsilon_i,
\end{align}
where $\epsilon_i$ is iid Gaussian noise and $f: [a, b] \to \R$ is an unknown function, which is given a Gaussian process prior with covariance function $k$. Assumptions about $k$ are critical for statistical inference and typically arise from domain expertise or physical knowledge about the data-generating process. There is a large body of literature on the selection of suitable covariance kernels (e.g.~\cite{vehtari1, stein2, duvenaud1}).  In many applied settings $k$ is not known a priori, but is assumed to belong to some parametric family of functions (e.g. Mat\'ern, squared-exponential, etc.) that depends on hyperparameters that are fit to the data. 

The goal of GP regression is to perform statistical inference on the unknown function $f$ at a set of points $\tilde{x} \in [a, b]$ or to understand certain properties of the data-generating process. Inference typically involves evaluating the mean and covariance of the density, conditional on observed data $\vct{x}, \vct{y} \in \R^{N}$. The conditional (or posterior) distribution of $f$ at any $\tilde{x} \in \R$ is the Gaussian
\begin{align}
f(\tilde{x}) \, | \, \vct{x}, \vct{y} \sim \cN(\tilde{\mu}, \tilde{\sigma}^2)
\end{align}
with 
\begin{equation}\label{84}
  \begin{aligned}
& \tilde{\mu} = \vct{k}(\tilde{x}, \vct{x}) (\mtx{K} + \sigma^2 \mtx{I})^{-1} \vct{y},  \\
& \tilde{\sigma}^2 = k(\tilde{x}, \tilde{x}) - \vct{k}(\tilde{x}, \vct{x}) (\mtx{K} + \sigma^2 \mtx{I})^{-1} \vct{k}(\vct{x}, \tilde{x}),    
\end{aligned}
\end{equation} 
where $\sigma$ is the standard deviation of $\epsilon_i$ (also called the nugget), the matrix $\mtx{K}$ is the $N \times N$ covariance matrix $\mtx{K}_{i,j} = k(x_i, x_j)$, the vector $\vct{k}(\vct{x}, x') \in \R^N$ is the column vector such that $\vct{k}(\tilde{x}, \vct{x})_i = f(\tilde{x}, \vct{x}_i)$, and $\vct{k}(\vct{x}, \tilde{x}) = \vct{k}(\tilde{x}, \vct{x})\tran$. 

Since $\mtx{K} + \sigma^2\mtx{I}$ is an $N \times N$ matrix, direct inversion is computationally intractable for large $N$. However, the Fourier representations of this paper admit a natural low-rank approximation to $\mtx{K}$. Specifically, $\mtx{K}$ is well-approximated by $\mtx{XX\tran}$ where $\mtx{X}$ is the $N \times 2m$ matrix defined in \eqref{x_mat}. In particular, $\mtx{K}_{i, j} = k(x_i, x_j)$ and $\mtx{XX\tran}_{i, j}$ is the quadrature rule approximation to $k(x_i, x_j)$ given by
\begin{align}
\sum_{\ell=1}^m 2\hk(\xi_{\ell}) w_{\ell} \cos(2\pi \xi_{\ell} (x_i - x_j)).
\end{align}
Using $\mtx{X X\tran}$ as an approximation of $\mtx{K}$, the linear systems that appear in $\tilde{\mu}$ and $\tilde{\sigma}^2$,
\begin{align}\label{138}
(\mtx{K} + \sigma^2 \mtx{I}) \vct{x} = \vct{y} \quad \text{and} \quad (\mtx{K} + \sigma^2 \mtx{I}) \vct{x} = \vct{k}(\vct{x}, \tilde{x}),
\end{align}
can be approximated in $O(Nm^2)$ operations using standard direct methods. 
For problems where $Nm^2$ operations is computationally infeasible, 
linear systems \eqref{138} can be obtained by solving the weight-space 
systems described in the following section \cite{greengard3}. 

\subsection{Weight-space inference}
In the weight-space approach to inference, the posterior density is defined over the coefficients of the basis function expansion -- in this paper a Fourier expansion. Specifically, given data $\{(x_i, y_i)\}$ and a covariance kernel $k$, Gaussian process regression becomes 
\begin{equation}\label{721}
\begin{aligned}
\vct{y} & \sim \mtx{X}\vct{\beta} + \vct{\epsilon} \\ 
\vct{\epsilon} & \sim \cN(0, \sigma^2 \mtx{I}) \\
\vct{\beta} & \sim \cN(0, \mtx{I}),
\end{aligned}
\end{equation}
where $\vct{\beta} \in \R^{2m}$ is the vector of coefficients and $\mtx{X}$ is the $N \times 2m$ matrix 
(\ref{x_mat}). The posterior density corresponding to (\ref{721}) is defined over $2m$ dimensions, the components of $\vct{\beta}$ (or $2m + 1$ including the residual standard deviation $\sigma$). The conditional
distribution of $\vct{\beta}$ (conditioning on $\sigma, \vct{x}, \vct{y}$) is Gaussian. 
Specifically,
\begin{align}
\vct{\beta} \, | \, \sigma, \vct{x}, \vct{y} \sim \cN(\vct{\bar{\beta}}, (\mtx{X\tran X} + \sigma^2 \mtx{I})^{-1}),
\end{align}
where $\vct{\bar{\beta}}$ is the solution to the linear system of equations
\begin{align}
(\mtx{X\tran X} + \sigma^2 \mtx{I}) \vct{\bar{\beta}} = \mtx{X\tran}\vct{y}.
\end{align}
The conditional mean $\vct{\bar{\beta}} \in \R^{2m}$ is the vector of coefficients of the 
conditional mean function
\begin{align}\label{cond_mean}
\sum_{i=1}^{m} \vct{\bar{\beta}}_{1,i} \cos(2\pi \xi_i x) + \vct{\bar{\beta}}_{2,i} \sin(2\pi \xi_i x)
\end{align}
for all $x \in [a, b]$. 

We now describe a numerical implementation of a solver for this system of equations that 
requires $O(N + m^3)$ operations by taking advantage of the Fourier representation of GPs.

\subsection{Numerical Implementation}
For a general $N \times 2m$ matrix $\mtx{X}$, the computational complexity of the numerical solution of the linear system of equations 
\begin{align}\label{91}
(\mtx{X\tran X} + \sigma^2 \mtx{I} ) \vct{x} = \mtx{X\tran}\vct{y}
\end{align}
is $O(Nm^2)$. For GP regression tasks with large amounts of data, $O(Nm^2)$ can be prohibitively expensive. However, for the linear system that appears in the case of Fourier representations, we take advantage of the structure of $\mtx{X}$ to reduce the linear solve to $O(N + m^3)$ operations. We do this by constructing the $2m \times 2m$ matrix $\mtx{X\tran X}$ in $O(N + m^2\log{m})$ operations using a non-uniform fast Fourier transform (FFT)~\cite{dutt1, greengard2} whereas constructing $\mtx{X\tran X}$ ordinarily requires $O(Nm^2)$ operations. After constructing $\mtx{X\tran X}$, the cost of the linear solve, for all hyperparameters input into Algorithm \ref{a10}, is $O(m^3)$. 

We observe that $\mtx{X\tran X}$ can be factorized as
\begin{align}\label{97}
\mtx{X\tran X} = \mtx{B\tran D \mtx{X'}\tran \mtx{X'} D B},
\end{align}
where $\mtx{X'}$ is defined by
\begin{equation}
\mtx{X'} = 
\renewcommand*{\arraystretch}{1.2}
\begin{bmatrix}
    e^{2\pi i\xi_1 x_1} & \dots & e^{2\pi i\xi_m x_1} & e^{2\pi i (-\xi_1) x_1} & \dots & e^{2\pi i (-\xi_m) x_1} \\
    e^{2\pi i\xi_1 x_2} & \dots & e^{2\pi i\xi_m x_2} & e^{2\pi i (-\xi_1) x_2} & \dots & e^{2\pi i (-\xi_m) x_2} \\
    \vdots &  & \vdots & \vdots & & \vdots \\
    e^{2\pi i\xi_1 x_N} & \dots & e^{2\pi i\xi_m x_N} & e^{2\pi i (-\xi_1) x_N} & \dots & e^{2\pi i (-\xi_m) x_N} \\
\end{bmatrix},
\end{equation}
where 
$\mtx{D}$ is the diagonal $2m \times 2m$ matrix
\begin{equation}\label{117}
\mtx{D} = 
\renewcommand*{\arraystretch}{1.2}
\begin{bmatrix}
    \gamma_1 & &&&&&\\
    & \ddots &&&&&\\
    & & \gamma_m&&&&\\
    & & & \gamma_1 &&&\\
    & & & & \ddots & &\\
    & & & & & \gamma_m &\\
\end{bmatrix},
\end{equation}
(see \eqref{gamma_i} for the definition of $\gamma_i$) and $\mtx{B}$ is $2m \times 2m$ block matrix 
\begin{equation}\label{118}
\mtx{B} = 
 \renewcommand*{\arraystretch}{1.5}
 \begin{bmatrix}
    \frac{1}{2} \mtx{I}_{m} & \frac{1}{2i}\mtx{I}_{m} \\ 
    \frac{1}{2} \mtx{I}_{m} & -\frac{1}{2i}\mtx{I}_{m}
\end{bmatrix},
\end{equation}
where $\mtx{I}_{m}$ denotes the $m \times m$ identity matrix. 
The matrix $\mtx{X'}\tran\mtx{X'}$ is a symmetric matrix where entry ${p, q}$ is given by 
\begin{align}\label{223}
(\mtx{X'}\tran \mtx{X'})_{p, q} = \sum_{j=1}^{N} e^{2\pi i x_j (\xi_{p} + \xi_{q})},
\end{align}
for all $p, q \in \{1,...,2m\}$, where we denote $-\xi_{p}$ with $\xi_{m + p}$. The sums of (\ref{223}) can be evaluated with a type 3 non-uniform FFT, a calculation that requires $O(N + m^2\log{m})$ operations. 
More precisely, we use the type 3 non-uniform FFT of~\cite{barnett1} to compute the sums
\begin{align}\label{eq50}
\sum_{j=1}^N e^{2\pi  i x_j \omega_{\ell}},
\end{align}
where $\ell = 1,...,2m^2+m$ and $\omega_{\ell} = \xi_p + \xi_q$ for $p, q \in \{1,...,2m\}$ with $q \geq p$. The total cost of evaluating \eqref{eq50} via the non-uniform FFT is $O(N + m^2\log{m})$ operations. A detailed description of the algorithm and its computational costs can be found in \cite{barnett1}. 
We also use the non-uniform FFT to compute $\mtx{X\tran}\vct{y}$, the right hand side of (\ref{91}). 

Since matrix multiplications of (\ref{97}) other than $\mtx{X'}\tran \mtx{X'}$ can be applied in $O(m^2)$ operations, the total cost of forming matrix $\mtx{X\tran X}$ using a non-uniform FFT is $O(N + m^2\log{m})$. Once $\mtx{X\tran X}$ is formed, the solution to the linear system and determinant calculation can be obtained in $O(m^3)$ operations using standard direct methods. In the following, we provide an algorithm for solving linear system (\ref{91}) in $O(N + m^3)$ operations.

\begin{algorithm}\label{a20}[GP regression solver]
\begin{enumerate}
\item
Use Algorithm \ref{a10} to construct nodes $\xi_1,...,\xi_m$ and weights $w_1,...,w_m$ of a Fourier expansion for a certain covariance kernel or family of kernels. 

\item\label{213}
Use the non-uniform FFT to compute the matrix-vector product
\begin{align}
\mtx{X\tran}\vct{y}
\end{align}
that appears in the right hand side of (\ref{91}). 

\item\label{215}
Use the non-uniform FFT to compute the sums 
\begin{align}
\sum_{j=1}^N e^{2\pi ix_j \omega_{\ell}}
\end{align}
where $\ell = 1,...,2m^2+m$ and $\omega_{\ell} = \xi_p + \xi_q$ for $p, q \in \{1,...,2m\}$ with $q \geq p$. 

\item 
Construct $\mtx{X\tran X}$ via 
\begin{align}
\mtx{X\tran X} = \mtx{B}\tran \mtx{D} \mtx{X'}\tran \mtx{X'} \mtx{D} \mtx{B},
\end{align}
where $\mtx{D}$ is defined in (\ref{117}) and $\mtx{B}$ is defined in (\ref{118}).

\item\label{217}
Compute the eigendecomposition (or Cholesky factorization) of $\mtx{X\tran X}$. 

\item
For computing the posterior mean or variance of GP regression, solve the linear system 
\begin{align}\label{ws_sys5}
(\mtx{X\tran X} + \sigma^2 \mtx{I}) \vct{\beta} = \mtx{X\tran}\vct{y}. 
\end{align}
The determinant of the matrix of \eqref{ws_sys5} can be evaluated via the 
matrix decomposition computed in step \ref{217}. 

\end{enumerate}
\end{algorithm} 

In Table \ref{7802} we provide computation times for solving the linear system of equations (\ref{91}) as well as forming $\mtx{X\tran X}$ and the matrix vector multiply $\mtx{X\tran}\vct{y}$. Additionally, in Figure \ref{2200} we plot the total computation times for GP regression for varying numbers of observations.

\subsection{Adaptation of hyperparameters}
In certain GP problems, the covariance function, $k$ is known a priori. However, in general, $k$ is assumed to belong to a certain parametric family of functions (e.g. Mat\'ern, squared-exponential, etc.) that depends on hyperparameters that are fit to the data. Methods for fitting hyperparameters to data generally involve performing GP regression and solving the linear system of equations (\ref{91}) and calculating a determinant for many hyperparameter 
values. 

For those problems, the methods of this paper have several advantages:
\begin{itemize}
\item
Algorithm \ref{a10} can be used to generate a Fourier expansion that is valid over the  domain of hyperparameters. That expansion depends on hyperparameters by a rescaling of basis functions (or equivalently by rescaling the prior standard deviation of the coefficients).

\item
Using Algorithm \ref{a20}, GP regression and determinant evaluation are performed in $O(m^3)$ operations for all hyperparameters after a precomputation of $O(N + m^2\log{m})$ operations. The precomputation involves the application of non-uniform FFTs (steps \ref{213} and \ref{215} of Algorithm \ref{a20}). 

\item
Since both the observation noise and the priors on regression coefficients are Gaussian, the posterior density in the coefficients $\beta$ is also Gaussian. As a result efficient numerical methods can be used for evaluating Bayesian posterior moments~\cite{lindley1, greeng1, greengard2} for certain problems.

\item
Evaluation of gradients of the likelihood function can be performed in $O(m^2)$ operations after inversion of the posterior covariance matrix (step \ref{217} in Algorithm \ref{a20}). The component-wise formula for the gradient of the log-likelihood of the posterior density is given by
\begin{align}\label{219}
\frac{\partial}{\partial \theta_j} \log(p(\vct{y} | \vct{\theta})) = \frac{1}{2}\vct{y\tran} \mtx{C^{-1}} \frac{\partial \mtx{C}}{\partial \theta_j}
\mtx{C^{-1}} \vct{y} - \frac{1}{2}\text{tr}\bigg(\mtx{C^{-1}}\frac{\partial \mtx{C}}{\partial \theta_j}\bigg),
\end{align}
where $\theta_j \in \R$ is the $j$-th hyperparameter (see e.g. \cite{rasmus1}), and $\mtx{C} = \mtx{X\tran}\mtx{X} + \sigma^2 \mtx{I} = \mtx{B}\tran \mtx{D} \mtx{X'}\tran \mtx{X'} \mtx{D} \mtx{B} + \sigma^2 \mtx{I}$ (see \eqref{97}). That is, the gradient of the log-likelihood can be evaluated in a time independent of $N$, the number of data points, after the one-time $O(N + m^2\log{m})$ cost of construction of $\mtx{X'}\tran \mtx{X'}$. 
\end{itemize}

\section{Numerical Experiments}\label{s120}
In this section we demonstrate the performance of the algorithms of this paper on randomly generated data. We implemented Algorithm \ref{a10} and Algorithm \ref{a20} in Fortran with the GFortran compiler on a 2.6 GHz 6-Core Intel Core i7 MacBook Pro. All examples were run in double precision arithmetic.

We used Algorithm \ref{a10} to construct Fourier expansions for the Gaussian processes defined on $[-1, 1]$ with Mat\'ern kernel (see \eqref{mat_ker})
for all $\nu \in [3/2, 7/2]$ and $\rho \in [0.1, 0.5]$.
We also generated quadratures for the squared-exponential kernel (see \eqref{se_def})
with $\rho \in [0.1, 0.5]$ and two quadrature tolerances, $\epsilon = 10^{-3}, 10^{-5}$ (see Step 2 of Algorithm \ref{a10}). 
We chose these kernels due to their widespread use in applications\cite{rasmus1}. 
The Mat\'ern family's popularity is largely due to its flexibility in representing processes 
of varying smoothness. In particular, the parameter $\nu$ controls smoothness -- a
Gaussian process with Mat\'ern kernel is $\lfloor \nu \rfloor$ times mean-squared
differentiable. Our choice of hyperparameters for the Mat\'ern kernel represents 
processes with a range of smoothness properties. 
We used the implementation of \cite{serkh1} for the generalized Gaussian quadrature with 
Mat\'ern and squared-exponential kernel. For the Mat\'ern kernel, we set $\epsilon = 10^{-5} $ in Algorithm \ref{a10}. The total run time for generating the  quadrature was $88$ seconds. 
The output of the code was $86$ total nodes and weights and all weights were positive. 
For the squared-exponential kernel, we generated two generalized Gaussian quadratures, 
one with $\epsilon = 10^{-5}$ and one with $\epsilon = 10^{-3}$. 
These procedure each took $13$ seconds and generated $21$ and $16$ nodes
respectively.

These numerical experiments demonstrate that the number of nodes, $m$, required to 
discretize the squared-exponential family of kernels is significantly smaller than 
the $m$ needed for the Mat\'ern family. 
Although both families need a similar density of nodes near zero to integrate kernels 
with large timescales $\rho$ (which concentrate near zero in Fourier domain), the 
Mat\'ern family requires more nodes at high frequencies. This is because
Mat\'ern kernels have slower decay in Fourier domain, with decays ranging from 
fourth to eigth order for $\rho \in [3/2, 7/2]$, whereas squared-exponential kernels 
have Gaussian decay. 
Therefore, while the squared-exponential and Matérn quadratures need similar 
densities of points near zero for accurate integration, the Matérn quadrature 
needs many more points at large frequencies

Figure \ref{2270} is a plot of the locations of the nodes obtained from this procedure. In Table \ref{7813} we list those nodes and the corresponding weights. 
Figure \ref{2271} is the corresponding plot for the squared-exponential quadratures 
and Tables \ref{7814} and \ref{t:se_nodes_weights_1e3} are the tables of nodes and weights.

In Figure \ref{fig:matern_heatmap} and Table \ref{7801bc}, we provide the $L^2$ error of 
the effective covariance
kernel for various hyperparameter values. The $L^2$ error of the effective kernel is defined
by the formula
\begin{align}\label{133}
\bigg(
\int_{-1}^{1}
\int_{-1}^{1}
(k'(x, y) - k(x, y))^2
dx \, dy
\bigg)^{1/2},
\end{align}
where $k'$ denotes the effective covariance kernel and $k$ denotes the true kernel. Integral (\ref{133}) was computed numerically using a tensor product of Gaussian nodes and weights. 
The approximate GP regression algorithm of this paper can be viewed as exact GP regression with a kernel that approximates the user-specified kernel. Here, we choose $L^2$ to measure the accuracy of that approximation. We note that it was shown in \cite{barnett2024} that a uniform bound on the pointwise error $|k(x) - k'(x)| < \epsilon$ for all $x$ implies a bound on the $l^2$ error in the conditional mean \eqref{cond_mean} at the data points of $N \epsilon / \sigma^2$.

Tables \ref{7802} and \ref{7802c} contain the results of numerical experiments for Gaussian process regression using Algorithms \ref{a10} and \ref{a20}. The data, $\{(x_i, y_i)\}$ was randomly generated on the interval $[-1, 1]$ via 
\begin{align}
y_i = \cos(3e^{x_i}) + \epsilon_i,
\end{align}
where 
\begin{align}
\epsilon_i \sim \cN(0, 0.5).
\end{align}
The $x_i$ were generated uniformly at random on $[-1, 1]$ and we note that the performance of the algorithm
is independent of locations of data points. The nodes and weights of the Fourier expansions used are given by Table \ref{7813}. We used Algorithm \ref{a20} to compute the conditional mean and covariance and report timings in Tables \ref{7802} and \ref{7802c} where $N$ denotes the number of data points, $\nu$ denotes the smoothness hyperparameters of the Mat\'ern kernel, and $\rho$ denotes the lengthscale. We used a residual variance, $\sigma^2 = 1$ and note that the performance of the algorithm is independent of this parameter. The $L^2$ error of the effective covariance kernel (see (\ref{133})) is provided in the column ``$L^2$ error". We include timings for solving the regression problem via Algorithm \ref{a20}. In column ``$t_{\text{form}}$ (s)" we provide the total time for formation of the matrix $\mtx{X\tran X}$ and the matrix vector multiply $\mtx{X\tran} \vct{y}$. The column ``$t_{\text{solve}}$ (s)" denotes the remaining time in the regression solution, which involves inversion of $(\mtx{X\tran X} + \sigma^2 \mtx{I})$. The total time is provided in column ``$t_{\text{total}}$ (s)."
The primary purpose of Tables \ref{7802} and \ref{7802c} is to demonstrate the 
scaling of compute times for $N$ varying from $10^5$ to $10^8$. 
These timings 
are independent of the choice of hyperparameters and thus hyperparameters for 
these experiments were arbitrarily chosen. 

The numerical experiments in this section illustrate the two primary advantages of 
the algorithms of this paper. 
First, in Table 1 and in Figure 3, we show that generalized quadratures
are able to approximate kernels (Mat\'ern and squared-exponential) over ranges of hyperparameters to high accuracy. 
Second, in Tables \ref{7802}, \ref{7802c} and Figure \ref{2200}, we demonstrate
that the solution to the GP linear system can be efficiently computed with fast 
algorithms. Since $m$ is significantly smaller than $N$, nearly all compute time 
in the linear solve is in the $O(N + m^2\log{m})$ precomputation consisting of
two non-uniform FFTs for forming $\mtx{X\tran}\mtx{X}$ and $\mtx{X\tran}\vct{y}$. 
Furthermore, after precomputation the cost of all linear solves is $O(m^3)$. 
We report timings in column $t_{\text{solve}}$ in Tables \ref{7802} and 
\ref{7802c}.

\begin{figure}[!ht]
\centering
\begin{tikzpicture}
\begin{axis}[
width=17cm,
height=3cm,
    xlabel={$\xi$},
    xmin = -1,
    xmax=51,
    ymin=-0.5,
    ymax=0.5,
    ytick=\empty,
]
\addplot[only marks, mark=*]
    coordinates 
    {
(    0.0960748783232733,    0.0000000000000000)
(    0.2949311558758212,    0.0000000000000000)
(    0.5121811831688609,    0.0000000000000000)
(    0.7553750530381257,    0.0000000000000000)
(    1.0272554775524241,    0.0000000000000000)
(    1.3267469591800769,    0.0000000000000000)
(    1.6509109783729810,    0.0000000000000000)
(    1.9964656597608781,    0.0000000000000000)
(    2.3604331716760738,    0.0000000000000000)
(    2.7402749478447399,    0.0000000000000000)
(    3.1338448880682450,    0.0000000000000000)
(    3.5393213277133500,    0.0000000000000000)
(    3.9551382871240710,    0.0000000000000000)
(    4.3799500415428971,    0.0000000000000000)
(    4.8125843418627330,    0.0000000000000000)
(    5.2520252807076444,    0.0000000000000000)
(    5.6973962780821781,    0.0000000000000000)
(    6.1479398698978196,    0.0000000000000000)
(    6.6029703189280022,    0.0000000000000000)
(    7.0619657471053916,    0.0000000000000000)
(    7.5244050646589304,    0.0000000000000000)
(    7.9898781249339503,    0.0000000000000000)
(    8.4580188390522402,    0.0000000000000000)
(    8.9284697412154603,    0.0000000000000000)
(    9.4010606000280621,    0.0000000000000000)
(    9.8755021184860947,    0.0000000000000000)
(   10.3515866576509996,    0.0000000000000000)
(   10.8290813405759305,    0.0000000000000000)
(   11.3078856905533307,    0.0000000000000000)
(   11.7878547186769005,    0.0000000000000000)
(   12.2690444362196605,    0.0000000000000000)
(   12.7507923286660407,    0.0000000000000000)
(   13.2340308273640499,    0.0000000000000000)
(   13.7168180215943405,    0.0000000000000000)
(   14.2023546685620694,    0.0000000000000000)
(   14.6852339737999902,    0.0000000000000000)
(   15.1736671929864606,    0.0000000000000000)
(   15.6561205601530808,    0.0000000000000000)
(   16.1465532564475396,    0.0000000000000000)
(   16.6287678133340897,    0.0000000000000000)
(   17.1217168205720291,    0.0000000000000000)
(   17.6027784484422583,    0.0000000000000000)
(   18.0975219052295202,    0.0000000000000000)
(   18.5783668916359304,    0.0000000000000000)
(   19.0739289334232396,    0.0000000000000000)
(   19.5535187961381212,    0.0000000000000000)
(   20.0540157049340202,    0.0000000000000000)
(   20.5303087220011804,    0.0000000000000000)
(   21.0308285956013101,    0.0000000000000000)
(   21.5033438976054896,    0.0000000000000000)
(   22.0147025165826093,    0.0000000000000000)
(   22.4786227659423616,    0.0000000000000000)
(   22.9890428278831500,    0.0000000000000000)
(   23.4620311240316504,    0.0000000000000000)
(   23.9659051179655798,    0.0000000000000000)
(   24.4374955606510014,    0.0000000000000000)
(   24.9268582817982995,    0.0000000000000000)
(   25.4355862458376691,    0.0000000000000000)
(   25.8908822711706499,    0.0000000000000000)
(   26.4135765170375585,    0.0000000000000000)
(   26.8244665781450813,    0.0000000000000000)
(   27.3929596371526216,    0.0000000000000000)
(   27.8810764015625807,    0.0000000000000000)
(   28.2872085273741511,    0.0000000000000000)
(   28.9186103516193818,    0.0000000000000000)
(   29.3361902609053793,    0.0000000000000000)
(   29.7475402852299098,    0.0000000000000000)
(   30.2919596593750207,    0.0000000000000000)
(   31.1707837877708087,    0.0000000000000000)
(   31.7921709344750916,    0.0000000000000000)
(   32.0844373549515112,    0.0000000000000000)
(   32.8290834819148998,    0.0000000000000000)
(   33.6870452221527970,    0.0000000000000000)
(   34.6556177080743311,    0.0000000000000000)
(   35.5973697159031417,    0.0000000000000000)
(   36.0608079398967192,    0.0000000000000000)
(   37.4828497993489194,    0.0000000000000000)
(   38.1056387575873927,    0.0000000000000000)
(   38.4560335475650206,    0.0000000000000000)
(   39.9230354160448471,    0.0000000000000000)
(   41.8408663755605872,    0.0000000000000000)
(   43.7213283268615385,    0.0000000000000000)
(   44.2767860363890975,    0.0000000000000000)
(   45.6320859437675992,    0.0000000000000000)
(   47.5466594857420191,    0.0000000000000000)
(   49.4591701554423580,    0.0000000000000000)
    };

\end{axis}
\end{tikzpicture}

\caption{\em Location of the $86$ nodes for GPs defined on $[-1, 1]$ with Mat\'ern kernels with $\nu \in [1.5, 3.5]$, and $\rho \in [0.1, 0.5]$.}
\label{2270}
\end{figure}

\begin{figure}[!ht]
\centering
\begin{subfigure}{\linewidth}
\centering
\begin{tikzpicture}
\begin{axis}[
width=17cm,
height=3cm,
    xlabel={$\xi$},
    xmin = -0.1,
    xmax=7.6,
    ymin=-0.5,
    ymax=0.5,
    ytick=\empty
]
\addplot[only marks, mark=*]
    coordinates 
    {
(    0.1229445208158333,    0.0000000000000000)
(    0.3699809596646247,    0.0000000000000000)
(    0.6205086829190636,    0.0000000000000000)
(    0.8770552409688902,    0.0000000000000000)
(    1.1425176151727869,    0.0000000000000000)
(    1.4200169312069510,    0.0000000000000000)
(    1.7121859512103039,    0.0000000000000000)
(    2.0204572753950010,    0.0000000000000000)
(    2.3450325050947400,    0.0000000000000000)
(    2.6853624988811311,    0.0000000000000000)
(    3.0404027813093690,    0.0000000000000000)
(    3.4089356692419108,    0.0000000000000000)
(    3.7897759561613382,    0.0000000000000000)
(    4.1819675933484106,    0.0000000000000000)
(    4.5848345162589501,    0.0000000000000000)
(    4.9982272332198843,    0.0000000000000000)
(    5.4216263344213846,    0.0000000000000000)
(    5.8568375371290067,    0.0000000000000000)
(    6.3049123731014509,    0.0000000000000000)
(    6.7704615862559896,    0.0000000000000000)
(    7.2856984407800462,    0.0000000000000000)
};
\end{axis}
\end{tikzpicture}
\caption{\em $\epsilon = 10^{-5}$}
\end{subfigure}

\vspace{0.5cm}

\begin{subfigure}{\linewidth}
\centering
\begin{tikzpicture}
\begin{axis}[
width=17cm,
height=3cm,
    xlabel={$\xi$},
    xmin = -0.1,
    xmax=7.6,
    ymin=-0.5,
    ymax=0.5,
    ytick=\empty,
]
\addplot[only marks, mark=*]
    coordinates 
    {
(    0.1422798837388160,    0.0000000000000000)
(    0.4303383428161451,    0.0000000000000000)
(    0.7298461801991711,    0.0000000000000000)
(    1.0470392453484560,    0.0000000000000000)
(    1.3851554070607901,    0.0000000000000000)
(    1.7436383837560350,    0.0000000000000000)
(    2.1204551041309121,    0.0000000000000000)
(    2.5128009567183280,    0.0000000000000000)
(    2.9171012161369898,    0.0000000000000000)
(    3.3326272282803751,    0.0000000000000000)
(    3.7585126968767120,    0.0000000000000000)
(    4.1943700876820937,    0.0000000000000000)
(    4.6430513969235117,    0.0000000000000000)
(    5.0962967069840737,    0.0000000000000000)
(    5.5883652398864117,    0.0000000000000000)
(    6.0596680913512913,    0.0000000000000000)
};
\end{axis}
\end{tikzpicture}
\caption{$\epsilon = 10^{-3}$}
\end{subfigure}

\caption{\em Location of the nodes for GPs defined on $[-1, 1]$ with squared-exponential kernels with $\rho \in [0.1, 0.5]$. Both sets of nodes were generated with Algorithm \ref{a10} with different error tolerance $\epsilon$. }
\label{2271}
\end{figure}

\begin{figure}[b!]
\centering
  \includegraphics[height=10cm, keepaspectratio]{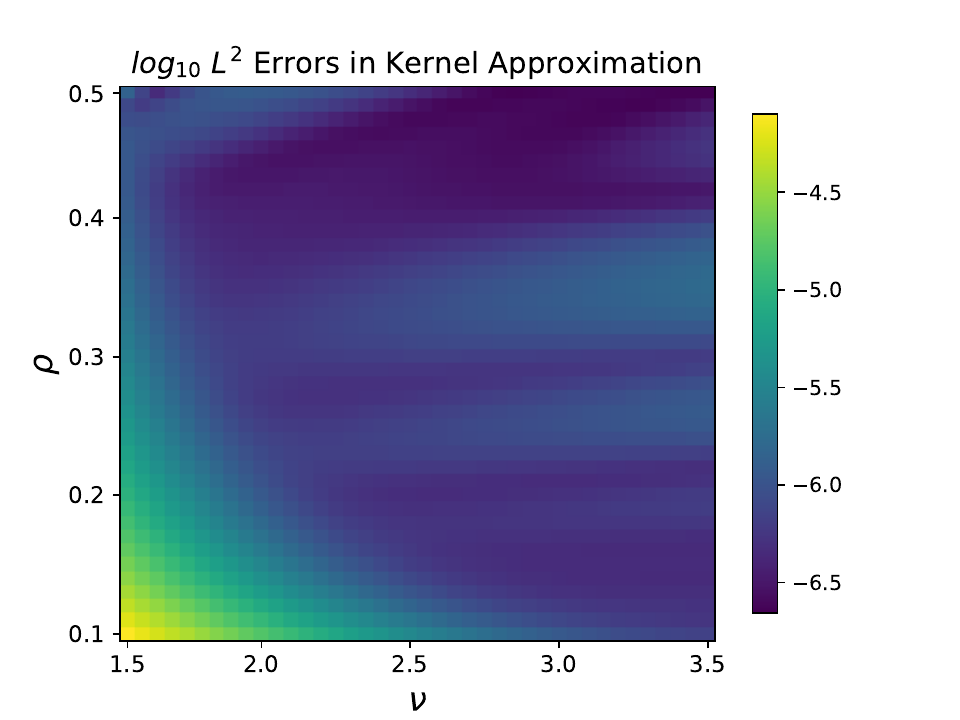}
  \caption{\em $\log_{10} \, \, L^2$ errors in the effective covariance kernel for Mat\'ern kernels. Specifically, we plot $\log_{10} \| k_{\nu, \rho} - k_{\nu, \rho}'\|_2$ for various $\nu, \rho$, where $k'_{\nu, \rho}$ denotes the effective kernel and $k_{\nu, \rho}$ the exact kernel.}
    \label{fig:matern_heatmap}
\end{figure}

\begin{table}[!ht]
\centering
  \begin{subtable}{0.45 \textwidth}
  \centering
\begin{tabular}{cc}
$\rho$ & $L^2$ error \\
  \hline
$0.10$&$0.943\times 10^{-5}$\\
$0.12$&$0.832\times 10^{-5}$\\
$0.14$&$0.847\times 10^{-5}$\\
$0.16$&$0.870\times 10^{-5}$\\
$0.18$&$0.872\times 10^{-5}$\\
$0.21$&$0.855\times 10^{-5}$\\
$0.23$&$0.827\times 10^{-5}$\\
$0.25$&$0.788\times 10^{-5}$\\
$0.27$&$0.732\times 10^{-5}$\\
$0.29$&$0.664\times 10^{-5}$\\
$0.31$&$0.593\times 10^{-5}$\\
$0.33$&$0.537\times 10^{-5}$\\
$0.35$&$0.495\times 10^{-5}$\\
$0.37$&$0.458\times 10^{-5}$\\
$0.39$&$0.421\times 10^{-5}$\\
$0.42$&$0.388\times 10^{-5}$\\
$0.44$&$0.361\times 10^{-5}$\\
$0.46$&$0.339\times 10^{-5}$\\
$0.48$&$0.323\times 10^{-5}$\\
$0.50$&$0.306\times 10^{-5}$\\
\end{tabular}
  \caption{\em $\epsilon=10^{-5}$}
    \label{7801b}
\end{subtable}
\begin{subtable}{0.45 \textwidth}
  \centering
\begin{tabular}{cc}
$\rho$ & $L^2$ error \\
  \hline
$0.10$&$0.657\times 10^{-3}$\\
$0.12$&$0.646\times 10^{-3}$\\
$0.14$&$0.690\times 10^{-3}$\\
$0.16$&$0.738\times 10^{-3}$\\
$0.18$&$0.780\times 10^{-3}$\\
$0.21$&$0.810\times 10^{-3}$\\
$0.23$&$0.839\times 10^{-3}$\\
$0.25$&$0.856\times 10^{-3}$\\
$0.27$&$0.852\times 10^{-3}$\\
$0.29$&$0.834\times 10^{-3}$\\
$0.31$&$0.823\times 10^{-3}$\\
$0.33$&$0.833\times 10^{-3}$\\
$0.35$&$0.855\times 10^{-3}$\\
$0.37$&$0.872\times 10^{-3}$\\
$0.39$&$0.878\times 10^{-3}$\\
$0.42$&$0.876\times 10^{-3}$\\
$0.44$&$0.872\times 10^{-3}$\\
$0.46$&$0.861\times 10^{-3}$\\
$0.48$&$0.834\times 10^{-3}$\\
$0.50$&$0.805\times 10^{-3}$
\end{tabular}
  \caption{\em $\epsilon=10^{-3}$}
    \label{7801c}
\end{subtable}
  \caption{\em Accuracy of Fourier representation \eqref{23} for squared-exponential kernels with lengthscales $\rho \in [0.1, 0.5]$ and two error tolerances, $\epsilon$. The nodes and weights used are reported in Table \ref{7814} and Table \ref{t:se_nodes_weights_1e3}.}
    \label{7801bc}
\end{table}

\begin{table}[!ht]
  \centering
\begin{tabular}{lcccccc}
  $N$ & $\nu$ & $\rho$ & $L^2$ error & $t_{\text{form}}$ (s) & $t_{\text{solve}}$ (s) & $t_{\text{total}}$ (s) \\
  \hline
$    10^5 $ & $ 3.0 $ & $ 0.1 $  & $0.113 \times 10^{-5}$ & 0.03 & 0.004 & 0.035 \\
$    10^6 $ & $ 2.0 $ & $ 0.5 $  & $0.118 \times 10^{-4}$ & 0.11 & 0.005 & 0.12 \\
$    10^7 $ & $ 1.5 $ & $ 0.1 $  & $0.780 \times 10^{-4}$ & 1.1 & 0.004 & 1.1 \\
$    10^8 $ & $ 3.5 $ & $ 0.3 $ & $0.630 \times 10^{-6}$ & 12.0 & 0.005 & 12.0
\end{tabular}
  \caption{\em Computation times and accuracy for regression with Mat\'ern kernels with $\nu \in [0.5, 3.5]$, and $\rho \in [0.1, 0.5]$. The nodes and weights used are reported in Table \ref{7813}.}
    \label{7802}
\end{table}

\begin{table}[!ht]
\centering
\begin{subtable}{\linewidth}
  \centering
\begin{tabular}{lccccccc}
  $N$ & $\rho$ & $L^2$ error & $t_{\text{form}}$ (s) & $t_{\text{solve}}$ (s) & $t_{\text{total}}$ (s) \\
  \hline
$ 10^5$ & $0.50$ & $0.306\times 10^{-5}$ & $0.02$ & $0.1 \times 10^{-3}$ & $0.02$ \\
$ 10^6$ & $0.20$ & $0.861\times 10^{-5}$ & $0.09$ & $0.1 \times 10^{-3}$ & $0.10$ \\
$ 10^7$ & $0.25$ & $0.782\times 10^{-5}$ & $1.08$ & $0.1 \times 10^{-3}$ & $1.08$ \\
$ 10^8$ & $0.10$ & $0.943\times 10^{-5}$ & $13.1$ & $0.2 \times 10^{-3}$ & $13.1$ \\
\end{tabular}
  \caption{\em Squared-exponential kernel, $\epsilon = 10^{-5}$}
 \vspace{0.5cm}
    \label{7802b}
\end{subtable}
\begin{subtable}{1.0\linewidth}
  \centering
\begin{tabular}{lccccccc}
  $N$ & $\rho$ & $L^2$ error & $t_{\text{form}}$ (s) & $t_{\text{solve}}$ (s) & $t_{\text{total}}$ (s) \\
  \hline
$ 10^5$ & $0.50$ & $0.805\times 10^{-3}$ & $0.02$ & $0.5 \times 10^{-4}$ & $0.02$ \\
$ 10^6$ & $0.20$ & $0.803\times 10^{-3}$ & $0.10$ & $0.7 \times 10^{-4}$ & $0.10$ \\
$ 10^7$ & $0.25$ & $0.857\times 10^{-3}$ & $1.04$ & $0.6 \times 10^{-4}$ & $1.04$ \\
$ 10^8$ & $0.10$ & $0.657\times 10^{-3}$ & $12.3$ & $0.7 \times 10^{-4}$ & $12.3$ \\
\end{tabular}
\caption{\em Squared-exponential kernel, $\epsilon = 10^{-3}$}
\end{subtable}
  \caption{\em Computation times and accuracy for regression with squared-exponential kernel with $\rho \in [0.1, 0.5]$.
  The nodes and weights used are reported in Table \ref{7814} and Table \ref{t:se_nodes_weights_1e3}. }
    \label{7802c}
\end{table}

\begin{figure}[!ht]
  \centering
\begin{tikzpicture}[scale=0.9]
\centering
\begin{axis}[
    xmode=log,
    ymode=log,
    xmin=50000, xmax=200000000,
    ymin=0.002, ymax=30,
    xtick={0.0,100000,1000000,10000000,100000000},
    xlabel=$N$,
    ytick={0, 0.01, 0.1, 1.0, 10.0},
    ylabel=time (s),
    legend pos= north west
]
\addplot[line width=0.5mm, mark=*, color=blue]
    coordinates 
    {
(   100000,    0.035)
(   1000000,    0.13)
(   10000000,    1.1)
(   100000000,    12.0)
    }; 
\addplot[line width=0.5mm, mark=*, color=red, dashed]
    coordinates 
    {
(   100000,    .01)
(   1000000,    .1)
(   10000000,    1)
(   100000000,    10.)
    }; 
 \legend{Algorithm \ref{a20}, $\propto N$}
\end{axis}
\end{tikzpicture}
\caption{\em Scaling times for evaluation of conditional mean for varying amounts of data
with Mat\'ern kernel. We include a plot proportional to $N$ for comparison.}
\label{2200}
\end{figure}

\section{Generalizations and Conclusions}\label{s125}
In this paper we introduce algorithms for representing and computing with Gaussian processes in $1$-dimension. In Algorithm \ref{a10}, we describe a numerical scheme for representing families of Gaussian processes as Fourier expansions of the form 
\begin{align}\label{137}
f(x) \sim \sum_{i=1}^{m} \alpha_i \gamma_i \cos(2 \pi \xi_i x) + \beta_i \gamma_i \sin(2 \pi \xi_i x),
\end{align}
where for $i=1,...,m$,  the frequencies $\xi_i \in \R$ are fixed, $\gamma_i \in \R$, and $\alpha_i, \beta_i$ are iid standard normal Gaussians.
These expansions are constructed in such a way that they are valid over families of covariance kernels. Representing a GP as expansion (\ref{137}), allows the use of Algorithm \ref{a20} to perform GP regression in $O(m^3)$ operations after $O(N + m^2 \log{m})$ precomputation where $N$ is the number of data points and $m$ the size of the expansion. 

While this paper is focused on GPs in $1$-dimensions, much of the theory and numerical
machinery extends naturally to higher dimensions. 
In particular, for GPs over $\R^d$, one generalization of the $1$-dimensional Fourier 
expansion is the tensor-product expansion of the form
\begin{align}\label{high_d_gp}
f(\vct{x}) \sim \sum_{\vct{j} \in \{1,...,m\}^{d}} \alpha_{\vct{j}} e^{i 2\pi \langle \vct{\xi_j}, \vct{x} \rangle},
\end{align}
where $\vct{x} \in \R^d$, $\vct{\xi_j} = (\xi_{\vct{j}_1}, \xi_{\vct{j}_2}, ..., \xi_{\vct{j}_d})$ where 
$\xi_1,...,\xi_m$ are equispaced real-valued frequencies. 
This representation of Gaussian process distributions was used in \cite{greengard3} 
for efficient GP regression in $1$, $2$, and $3$ dimensions.  
Accompanying error analysis can be found in \cite{barnett2024}. 
As in the $1$-dimensional case, the 
weight-space linear system corresponding to \eqref{high_d_gp}
contains structure in $d$ dimensions that is amenable to fast algorithms. 
In particular, the matrix of the weight-space linear system has $d$-dimensional 
Toeplitz structure and can be applied efficiently using the non-uniform FFT, 
facilitating the use of iterative methods. 

We plan to address the higher dimensional extension of the tools of this paper in 
subsequent publications.

\appendix
\section{Detailed results of generalized quadrature}
In this section we include tables of quadrature nodes and weights generated
by Algorithm \ref{a10}.

\begin{table}[!ht]
  \centering
\resizebox{!}{8cm}{%
\begin{tabular}{cccccc}
 $i$ & nodes & weights & $i$ & nodes & weights \\
  \hline
$  1$&$    0.0960748783232733$&$    0.1933002284075283$& $ 44$&$   18.5783668916359304 $ & $    0.4901095181946152$ \\
$  2$&$    0.2949311558758212$&$    0.2064005047360611$&$ 45$&$   19.0739289334232396$&$    0.4936700008508603$ \\
$  3$&$    0.5121811831688609$&$    0.2293690006634405$&$ 46$&$   19.5535187961381212$&$    0.4806551264127251$ \\
$  4$&$    0.7553750530381257$&$    0.2574584689846101$&$ 47$&$   20.0540157049340202$&$    0.4919980518793062$ \\
$  5$&$    1.0272554775524241$&$    0.2860863008101612$&$ 48$&$   20.5303087220011804$&$    0.4886725084331118$ \\
$  6$&$    1.3267469591800769$&$    0.3123806483750251$&$ 49$&$   21.0308285956013101$&$    0.4985884540483315$ \\
$  7$&$    1.6509109783729810$&$    0.3353906370922179$&$ 50$&$   21.5033438976054896$&$    0.4525339724563319$ \\
$  8$&$    1.9964656597608781$&$    0.3552203181130367$&$ 51$&$   22.0147025165826093$&$    0.5300911113825829$ \\
$  9$&$    2.3604331716760738$&$    0.3722929241325456$&$ 52$&$   22.4786227659423616$&$    0.4703295964417554$ \\
$ 10$&$    2.7402749478447399$&$    0.3870336516793261$&$ 53$&$   22.9890428278831500$&$    0.5031381396430530$ \\
$ 11$&$    3.1338448880682450$&$    0.3998022537083804$&$ 54$&$   23.4620311240316504$&$    0.4385610519729455$ \\
$ 12$&$    3.5393213277133500$&$    0.4108896687408796$&$ 55$&$   23.9659051179655798$&$    0.5529348248583776$ \\
$ 13$&$    3.9551382871240710$&$    0.4205230182176792$&$ 56$&$   24.4374955606510014$&$    0.4879736648158969$ \\
$ 14$&$    4.3799500415428971$&$    0.4289035535544789$&$ 57$&$   24.9268582817982995$&$    0.4320150910275336$ \\
$ 15$&$    4.8125843418627330$&$    0.4361897874230700$&$ 58$&$   25.4355862458376691$&$    0.5030088994396542$ \\
$ 16$&$    5.2520252807076444$&$    0.4425587482527165$&$ 59$&$   25.8908822711706499$&$    0.4986129817647325$ \\
$ 17$&$    5.6973962780821781$&$    0.4480785412063698$&$ 60$&$   26.4135765170375585$&$    0.5698981287926346$ \\
$ 18$&$    6.1479398698978196$&$    0.4528922343761104$&$ 61$&$   26.8244665781450813$&$    0.3580010222446956$ \\
$ 19$&$    6.6029703189280022$&$    0.4571127652712647$&$ 62$&$   27.3929596371526216$&$    0.4477770521368305$ \\
$ 20$&$    7.0619657471053916$&$    0.4607536593597344$&$ 63$&$   27.8810764015625807$&$    0.6306236380217451$ \\
$ 21$&$    7.5244050646589304$&$    0.4640030341808996$&$ 64$&$   28.2872085273741511$&$    0.5684848926018280$ \\
$ 22$&$    7.9898781249339503$&$    0.4668744890692324$&$ 65$&$   28.9186103516193818$&$    0.1661641302432533$ \\
$ 23$&$    8.4580188390522402$&$    0.4693377032983527$&$ 66$&$   29.3361902609053793$&$    0.7145700506865926$ \\
$ 24$&$    8.9284697412154603$&$    0.4715695985713800$&$ 67$&$   29.7475402852299098$&$    0.3653264402765759$ \\
$ 25$&$    9.4010606000280621$&$    0.4735626358176499$&$ 68$&$   30.2919596593750207$&$    0.7987670620900847$ \\
$ 26$&$    9.8755021184860947$&$    0.4753261760015748$&$ 69$&$   31.1707837877708087$&$    0.6496969625503436$ \\
$ 27$&$   10.3515866576509996$&$    0.4768203234940871$&$ 70$&$   31.7921709344750916$&$    0.6374198048803309$ \\
$ 28$&$   10.8290813405759305$&$    0.4782250403991462$&$ 71$&$   32.0844373549515112$&$    0.4525776393523478$ \\
$ 29$&$   11.3078856905533307$&$    0.4795013762603529$&$ 72$&$   32.8290834819148998$&$    0.5792967675329964$ \\
$ 30$&$   11.7878547186769005$&$    0.4806019700535451$&$ 73$&$   33.6870452221527970$&$    1.2316989151794200$ \\
$ 31$&$   12.2690444362196605$&$    0.4816049378992831$&$ 74$&$   34.6556177080743311$&$    0.6653181217090052$ \\
$ 32$&$   12.7507923286660407$&$    0.4826033499503408$&$ 75$&$   35.5973697159031417$&$    0.7979748203948971$ \\
$ 33$&$   13.2340308273640499$&$    0.4830779751411056$&$ 76$&$   36.0608079398967192$&$    0.9871538295211217$ \\
$ 34$&$   13.7168180215943405$&$    0.4845298908539934$&$ 77$&$   37.4828497993489194$&$    1.1429690155529550$ \\
$ 35$&$   14.2023546685620694$&$    0.4833485049199899$&$ 78$&$   38.1056387575873927$&$    0.3983778654241495$ \\
$ 36$&$   14.6852339737999902$&$    0.4872979567198063$&$ 79$&$   38.4560335475650206$&$    0.7492963615598504$ \\
$ 37$&$   15.1736671929864606$&$    0.4828723127156595$&$ 80$&$   39.9230354160448471$&$    1.7911442981045280$ \\
$ 38$&$   15.6561205601530808$&$    0.4883987918117061$&$ 81$&$   41.8408663755605872$&$    1.9661520413352620$ \\
$ 39$&$   16.1465532564475396$&$    0.4852092322682524$&$ 82$&$   43.7213283268615385$&$    1.6715931870761731$ \\
$ 40$&$   16.6287678133340897$&$    0.4863731168868852$&$ 83$&$   44.2767860363890975$&$    0.2685526601061519$ \\
$ 41$&$   17.1217168205720291$&$    0.4871456130687457$&$ 84$&$   45.6320859437675992$&$    1.7725506245674350$ \\
$ 42$&$   17.6027784484422583$&$    0.4894715943022971$&$ 85$&$   47.5466594857420191$&$    1.9715488370126710$ \\
$ 43$&$   18.0975219052295202$&$    0.4843933188273860$&$ 86$&$   49.4591701554423580$&$    1.5256479816263220$ \\
\end{tabular}
}
  \caption{\em Nodes and weights for GPs defined on $[-1, 1]$ with Mat\'ern kernels with $\nu \in [0.5, 3.5]$, $\rho \in [0.1, 0.5]$ and generalized quadrature error tolerance $\epsilon = 10^{-5}$}
    \label{7813}
\end{table}

\begin{table}[!ht]
  \centering
\resizebox{0.5\textwidth}{!}{%
\begin{tabular}{ccc}
 $i$ & nodes & weights  \\
  \hline
$  1$&$    0.1229445208158333$ & $    0.2460787859722326$ \\
$  2$&$    0.3699809596646247$ & $    0.2483834500933746$ \\
$  3$&$    0.6205086829190636$ & $    0.2530883303134549$ \\
$  4$&$    0.8770552409688902$ & $    0.2604997558962125$ \\
$  5$&$    1.1425176151727869$ & $    0.2709680345476000$ \\
$  6$&$    1.4200169312069510$ & $    0.2844795049982405$ \\
$  7$&$    1.7121859512103039$ & $    0.3000950603476956$ \\
$  8$&$    2.0204572753950010$ & $    0.3164711614602873$ \\
$  9$&$    2.3450325050947400$ & $    0.3326020197065039$ \\
$ 10$&$    2.6853624988811311$ & $    0.3478765986892886$ \\
$ 11$&$    3.0404027813093690$ & $    0.3619772453611054$ \\
$ 12$&$    3.4089356692419108$ & $    0.3748597265560644$ \\
$ 13$&$    3.7897759561613382$ & $    0.3866892173786112$ \\
$ 14$&$    4.1819675933484106$ & $    0.3975832074366130$ \\
$ 15$&$    4.5848345162589501$ & $    0.4080437319776904$ \\
$ 16$&$    4.9982272332198843$ & $    0.4187013160108272$ \\
$ 17$&$    5.4216263344213846$ & $    0.4286803282747237$ \\
$ 18$&$    5.8568375371290067$ & $    0.4407898239564511$ \\
$ 19$&$    6.3049123731014509$ & $    0.4532465428424199$ \\
$ 20$&$    6.7704615862559896$ & $    0.4793970577067649$ \\
$ 21$&$    7.2856984407800462$ & $    0.5955092328616340$ \\

\end{tabular}
}
  \caption{\em Nodes and weights for GPs defined on $[-1, 1]$ with squared-exponential kernels with $\rho \in [0.1, 0.5]$ and generalized quadrature error tolerance $\epsilon = 10^{-5}$}
    \label{7814}
\end{table}

\begin{table}[!ht]
  \centering
\resizebox{0.5\textwidth}{!}{%
\begin{tabular}{ccc}
 $i$ & nodes & weights  \\
  \hline
$  1$&$    0.1422798837388160$ & $    0.2849869860136982$ \\
$  2$&$    0.4303383428161451$ & $    0.2928359950804127$ \\
$  3$&$    0.7298461801991711$ & $    0.3067700728299949$ \\
$  4$&$    1.0470392453484560$ & $    0.3273066745906839$ \\
$  5$&$    1.3851554070607901$ & $    0.3487384660693318$ \\
$  6$&$    1.7436383837560350$ & $    0.3683279469348825$ \\
$  7$&$    2.1204551041309121$ & $    0.3845946863247884$ \\
$  8$&$    2.5128009567183280$ & $    0.3990369748481500$ \\
$  9$&$    2.9171012161369898$ & $    0.4110929506127443$ \\
$ 10$&$    3.3326272282803751$ & $    0.4225819367888854$ \\
$ 11$&$    3.7585126968767120$ & $    0.4287907919621072$ \\
$ 12$&$    4.1943700876820937$ & $    0.4446683446440617$ \\
$ 13$&$    4.6430513969235117$ & $    0.4594237430500937$ \\
$ 14$&$    5.0962967069840737$ & $    0.4282067825430210$ \\
$ 15$&$    5.5883652398864117$ & $    0.5632442756521038$ \\
$ 16$&$    6.0596680913512913$ & $    0.4616691765322060$ \\
\end{tabular}
}
  \caption{\em Nodes and weights for GPs defined on $[-1, 1]$ with squared-exponential kernels with $\rho \in [0.1, 0.5]$ and generalized quadrature error tolerance $\epsilon = 10^{-3}$}
    \label{t:se_nodes_weights_1e3}
\end{table}

\newpage
\bibliographystyle{abbrv}
\bibliography{refs}

\begin{thebibliography}{10}

\bibitem{oneil1}
S.~Ambikasaran, D.~Foreman-Mackey, L.~Greengard, D.~W. Hogg, and M.~O’Neil.
\newblock {Fast Direct Methods for Gaussian Processes}.
\newblock {\em IEEE Trans. Pattern Anal. Mach. Intell.}, 38(2):252--265, 2016.

\bibitem{banerjee1}
S.~Banerjee, A.~E. Gelfand, A.~O. Finley, and H.~Sang.
\newblock {Gaussian predictive process models for large spatial data sets}.
\newblock {\em Royal Statistical Society}, 70(4):825--848, 2008.

\bibitem{barnett2024}
A.~Barnett, P.~Greengard, and M.~Rachh.
\newblock Uniform approximation of common gaussian process kernels using
  equispaced fourier grids.
\newblock {\em Applied and Computational Harmonic Analysis}, 71:101640, 2024.

\bibitem{barnett1}
A.~H. Barnett, J.~Magland, and L.~af~Klinteberg.
\newblock A parallel nonuniform fast fourier transform library based on an
  “exponential of semicircle" kernel.
\newblock {\em SIAM Journal on Scientific Computing}, 41(5):C479--C504, 2019.

\bibitem{bartok1}
A.~P. Bart\'ok, M.~C. Payne, R.~Kondor, and G.~Cs\'anyi.
\newblock Gaussian approximation potentials: The accuracy of quantum mechanics,
  without the electrons.
\newblock {\em Phys. Rev. Lett.}, 104:136403, Apr 2010.

\bibitem{bremer2}
J.~Bremer and Z.~Gimbutas.
\newblock A nyström method for weakly singular integral operators on surfaces.
\newblock {\em Journal of Computational Physics}, 231(14):4885--4903, 2012.

\bibitem{bremer1}
J.~Bremer, Z.~Gimbutas, and V.~Rokhlin.
\newblock A nonlinear optimization procedure for generalized gaussian
  quadratures.
\newblock {\em SIAM Journal on Scientific Computing}, 32(4):1761--1788, 2010.

\bibitem{cressie1}
N.~Cressie.
\newblock {\em {Statistics for Spatial Data, Revised Edition}}.
\newblock Wiley-Interscience, Hoboken, NJ, 2015.

\bibitem{dahlquist1}
G.~Dahlquist and A.~Bjork.
\newblock {\em {Numerical Methods}}.
\newblock Dover, Mineola, NY, 1974.

\bibitem{dao2017}
T.~Dao, C.~D. Sa, and C.~R\'{e}.
\newblock Gaussian quadrature for kernel features.
\newblock In {\em Proceedings of the 31st International Conference on Neural
  Information Processing Systems}, NIPS'17, page 6109–6119, Red Hook, NY,
  USA, 2017. Curran Associates Inc.

\bibitem{dutt1}
A.~Dutt and V.~Rokhlin.
\newblock Fast fourier transforms for nonequispaced data.
\newblock {\em SIAM Journal on Scientific Computing}, 14(6):1368--1393, 1993.

\bibitem{duvenaud1}
D.~Duvenaud.
\newblock {\em Automatic Model Construction with {G}aussian Processes}.
\newblock PhD thesis, {Computational and Biological Learning Laboratory,
  University of Cambridge}, 2014.

\bibitem{filip2019random}
S.~Filip, A.~Javeed, and L.~N. Trefethen.
\newblock {Smooth Random Functions, Random ODEs, and Gaussian Processes}.
\newblock {\em SIAM Review}, 61(1):185--205, 2019.

\bibitem{dfm1}
D.~Foreman-Mackey, E.~Agol, S.~Ambikasaran, and R.~Angus.
\newblock {Fast and Scalable Gaussian Process Modeling with Applications to
  Astronomical Time Series}.
\newblock {\em The Astronomical Journal}, 154(6), 2017.

\bibitem{bda}
A.~Gelman, J.~B. Carlin, H.~S. Stern, D.~B. Dunson, A.~Vehtari, and D.~B.
  Rubin.
\newblock {\em {Bayesian Data Analysis}}.
\newblock Chapman and Hall/CRC, New York, NY, 3rd edition, 2013.

\bibitem{gonzalvez1}
J.~Gonzalvez, E.~Lezmi, T.~Roncalli, and J.~Xu.
\newblock {Financial Applications of Gaussian Processes and Bayesian
  Optimization}.
\newblock {\em arXiv}, q-fin/1903.04841, 2019.

\bibitem{lgreengard1}
L.~Greengard and J.-Y. Lee.
\newblock Accelerating the nonuniform fast fourier transform.
\newblock {\em SIAM Review}, 46(3):443--454, 2004.

\bibitem{greeng1}
P.~Greengard, A.~Gelman, and A.~Vehtari.
\newblock {A Fast Regression via SVD and Marginalization}.
\newblock {\em Computational Statistics}, 2021.

\bibitem{greengard2}
P.~Greengard and M.~O'Neil.
\newblock {Efficient reduced-rank methods for Gaussian processes with
  eigenfunction expansions}.
\newblock {\em Statistics and Computing}, 32(5):94, 2022.

\bibitem{greengard3}
P.~Greengard, M.~Rachh, and A.~Barnett.
\newblock {Equispaced Fourier representations for efficient Gaussian process
  regression from a billion data points}.
\newblock {\em arXiv}, stat.CO/2210.10210, 2022.

\bibitem{hensman1}
J.~Hensman, N.~Durrande, and A.~Solin.
\newblock Variational fourier features for gaussian processes.
\newblock {\em J. Mach. Learn. Res.}, 18(1):5537–5588, 2017.

\bibitem{hoskins1}
J.~G. Hoskins, V.~Rokhlin, and K.~Serkh.
\newblock On the numerical solution of elliptic partial differential equations
  on polygonal domains.
\newblock {\em SIAM Journal on Scientific Computing}, 41(4):A2552--A2578, 2019.

\bibitem{karlin1}
S.~Karlin and W.~Studden.
\newblock {\em {Tchebycheff Systems with Applications in Analysis and
  Statistics}}.
\newblock Wiley-Interscience, 1966.

\bibitem{lazaro2010}
M.~L{{\'a}}zaro-Gredilla, J.~Qui{{\~n}}nero-Candela, C.~E. Rasmussen, and A.~R.
  Figueiras-Vidal.
\newblock Sparse spectrum gaussian process regression.
\newblock {\em Journal of Machine Learning Research}, 11(63):1865--1881, 2010.

\bibitem{lindley1}
D.~V. Lindley and A.~F.~M. Smith.
\newblock Bayes estimates for the linear model.
\newblock {\em Journal of the Royal Statistical Society. Series B
  (Methodological)}, 34(1):1--41, 1972.

\bibitem{luger1}
R.~Luger, D.~Foreman-Mackey, and C.~Hedges.
\newblock Mapping stellar surfaces. ii. an interpretable gaussian process model
  for light curves.
\newblock {\em The Astronomical Journal}, 162:124, Aug 2021.

\bibitem{ma1}
J.~Ma, V.~Rokhlin, and S.~Wandzura.
\newblock Generalized gaussian quadrature rules for systems of arbitrary
  functions.
\newblock {\em SIAM Journal on Numerical Analysis}, 33(3):971--996, 1996.

\bibitem{minden1}
V.~Minden, A.~Damle, K.~L. Ho, and L.~Ying.
\newblock {Fast Spatial Gaussian Process Maximum Likelihood Estimation via
  Skeletonization Factorizations}.
\newblock {\em Multiscale Modeling and Simulation}, 15(4), 2017.

\bibitem{munkhoeva2018}
M.~Munkhoeva, Y.~Kapushev, E.~Burnaev, and I.~Oseledets.
\newblock Quadrature-based features for kernel approximation.
\newblock In S.~Bengio, H.~Wallach, H.~Larochelle, K.~Grauman, N.~Cesa-Bianchi,
  and R.~Garnett, editors, {\em Advances in Neural Information Processing
  Systems}, volume~31. Curran Associates, Inc., 2018.

\bibitem{vehtari1}
T.~Paananen, J.~Piironen, M.~R. Andersen, and A.~Vehtari.
\newblock Variable selection for gaussian processes via sensitivity analysis of
  the posterior predictive distribution.
\newblock In {\em Proceedings of the Twenty-Second International Conference on
  Artificial Intelligence and Statistics}, volume~89, pages 1743--1752. PMLR,
  16--18 Apr 2019.

\bibitem{candela1}
J.~Quinonero-Candela and C.~E. Rasmussen.
\newblock {Analysis of some methods for reduced rank Gaussian process
  regression}.
\newblock In {\em Switching and learning in feedback systems}, pages 98--127.
  Springer, 2005.

\bibitem{rahimi2007}
A.~Rahimi and B.~Recht.
\newblock Random features for large-scale kernel machines.
\newblock In J.~Platt, D.~Koller, Y.~Singer, and S.~Roweis, editors, {\em
  Advances in Neural Information Processing Systems}, volume~20. Curran
  Associates, Inc., 2008.

\bibitem{rasmus1}
C.~E. Rasmussen and C.~L.~I. Williams.
\newblock {\em {Gaussian Processes for Machine Learning}}.
\newblock MIT Press, Cambridge, MA, 2006.

\bibitem{serkh1}
K.~Serkh.
\newblock {\em Personal correspondence}, 2021.

\bibitem{shustin1}
P.~F. Shustin and H.~Avron.
\newblock Gauss-legendre features for gaussian process regression, 2021.

\bibitem{solin1}
A.~Solin and S.~S\"arkk\"a.
\newblock {Hilbert space methods for reduced-rank Gaussian process regression}.
\newblock {\em Statistics and Computing}, 30, 2020.

\bibitem{stein2}
M.~L. Stein.
\newblock {\em {Interpolation of Spatial Data, Some Theory for Kriging}}.
\newblock Springer, New York, NY, 1999.

\bibitem{trefethen}
L.~N. Trefethen.
\newblock {\em {Approximation Theory and Approximation Practice: Extended
  Edition}}.
\newblock SIAM, Philadelphia, PA, 2020.

\end{thebibliography}

\end{document}